\makeatletter \@addtoreset{equation}{section} \makeatother
\newtheorem{theorem}{Theorem}[section]
\newtheorem{lemma}[theorem]{Lemma}
\newtheorem{proposition}[theorem]{Proposition}
\newtheorem{corollary}[theorem]{Corollary}
\newcommand{\mf}[1]{{\mathfrak #1}}
\newcommand{\mb}[1]{{\mathbf #1}}
\newcommand{\bb}[1]{{\mathbb #1}}
\renewcommand\thefigure{\thesection.\@arabic\c@figure}
\renewcommand\thetable{\thesection.\@arabic\c@table}
\newcommand{\real}{\mathbb{R}}
\newcommand{\<}{\langle}
\renewcommand{\>}{\rangle}
\numberwithin{equation}{section}
\begin{document}

\title[Growth dynamics evolving according to the Gompertz curve]
{A Markovian growth dynamics on rooted binary trees evolving
  according to the Gompertz curve}

\author{C. Landim, R. D. Portugal and B. F. Svaiter}

\address{\noindent IMPA, Estrada Dona Castorina 110, 22460-320
    Rio de Janeiro, Brazil and CNRS UMR 6085, Universit\'e de Rouen,
    Avenue de l'Universit\'e, BP.12, Technop\^ole du Madril\-let,
    F76801 Saint-\'Etienne-du-Rouvray, France. 
\newline e-mail: \rm \texttt{landim@impa.br}}  

\address{\noindent Faculty of Medicine, Federal University of Rio
    de Janeiro} 

  \address{\noindent IMPA, Estrada Dona Castorina 110, 22460-320 Rio
    de Janeiro, Brazil.  Partially supported by
    CNPq grants  302962/2011-5 and  474944/2010-7, FAPERJ grant 
   E-26/102.940/2011 and by PRONEX-Optimization
\newline e-mail: \rm \texttt{benar@impa.br} }

\keywords{Aging; Random binary trees, Gompertz curve; Growth
  processes}

\begin{abstract}
  Inspired by biological dynamics, we consider a growth Markov process
  taking values on the space of rooted binary trees, similar to the
  Aldous-Shields model \cite{AS}. Fix $n\ge 1$ and $\beta>0$. We start
  at time $0$ with the tree composed of a root only. At any time, each
  node with no descendants, independently from the other nodes,
  produces two successors at rate $\beta(n-k)/n$, where $k$ is the
  distance from the node to the root. Denote by $Z_n(t)$ the number of
  nodes with no descendants at time $t$ and let $T_n = \beta^{-1} n
  \ln( n /\ln 4) + (\ln 2)/(2 \beta)$. We prove that $2^{-n} Z_n(T_n +
  n \tau)$, $\tau\in\bb R$, converges to the Gompertz curve $\exp (-
  (\ln 2) \, e^{-\beta \tau})$. We also prove a central limit theorem
  for the martingale associated to $Z_n(t)$.
\end{abstract}

\maketitle

\section{Introduction}
\label{sec:intr}

Gompertz model was originally proposed as an actuarial
curve~\cite{Gompertz} to model mortality of an aging population, and
this approach still has applications in current survival analysis
\cite{Coe}. One century after its creation, the model surpassed its
original realm and was used as a biological growth curve \cite{Wright,
  Winsor}.  Since then, this function has successfully described
animal growth~\cite {Laird1}, regeneration \cite{Wallenstein}, and
tumor growth after the pioneering work of Laird~\cite {Laird3}.

The astonishing fact is not only that Gompertz model fits successfully
in many cases of biological growth, but mainly that its biological
foundation is still not fully understood and that the Gompertz curve
has not yet been derived as the scaling limit of some microscopic
dynamics.

Inspired by recent biological experiments on cellular aging, we
present in this article a microscopic Markovian growth dynamics which
leads to the Gompertz curve.  Telomeres are nucleo-proteins located at
the end terminal of chromosomes which shorten at each somatic cell
division. In cultured cells, growth is not observed indefinitely, the
division rate slows down and ultimately ceases \cite{Hayflick}. There
are significant evidences that telomeres act as a molecular counting
device which regulates the number of cell divisions and limits further
division after a critic length is achieved \cite{Harley}.  Recent
experimental evidences suggest also that telomere shortening may also
be related with the decrease of mitotic rate, the proportion of cells
in a tissue that are undergoing mitosis \cite{Baxter}.

On the mathematical side, computer simulation of a very simple
discrete-time stochastic model of telomere regulated growth yielded
growth curves similar to the Gompertzian model \cite{Portugal}.  The
basic assumption of this model was a \emph{linear} decrease in mitosis
probability with telomere shortening.  We take a step further in this
article by studying a continuous-time branching process. Instead of
simulations, we prove that the size of the population properly
rescaled in time converges to the Gompertz curve. We also estimate the
mean telomere size of the cell population, a quantity that can be
actually measured in cultured cells.

Our model is similar to the Aldous-Shields model \cite{AS, DM, BP} for
a rooted, growing random binary tree. While in their model each active
vertex becomes inactive and activates its two descendant at a rate
which decreases exponentially with the distance of the vertex to the
root, in our model the rate decreases linearly. 

\section{The stochastic model}
\label{sec:model}

We present in this article a random growth model which we will
interpret as a cell division process regulated by telomere shortening.
Cell division is clearly not an instantaneous process since after a
division each cell must synthesize a number of cellular components
before it can divide again. We will assume, however, that these
processes occur in a time scale much smaller than the time scale in
which cell division takes place.

Assume that the initial state is a single cell with telomere length
$L_0$, that a \emph{fixed} amount of basis, say $\delta$, is lost by
each telomere at each cell division, and that a cell reaches mitotic
senescence, which means that it does not divide anymore, when its
telomere attains a critical length $L_{\min}$. Without loss of
generality, we may suppose that $n= (L_0-L_{\min})/\delta$ is a
positive integer representing the maximum number of divisions which a
cell may undergo, the so called Hayflick limit in the biological
literature.

The total time spent in a cell cycle and arrest for a cell which has
undergone $k$ divisions, $0\le k\le n$, is modeled as an exponential
random variable with parameter $\beta (n-k)/n$, where $\beta>0$ is a
fixed parameter representing the rate at which a cell with telomere of
length $L_0$ undergoes a division. These exponential random variables
are of course assumed to be mutually independent.

The dynamics just described is a Markov process taking values on the
space of rooted binary trees, similar to the Aldous-Shields model
\cite{AS}. We start at time $0$ with the tree composed of a root
only. At any time, each node with no descendants, independently from
the other nodes, produces two successors at a rate equal to
$\beta(n-k)/n$, where $k$ is the distance from the node to the root.
Eventually the process reaches the rooted binary tree with $n$
generations, which is the absorbing state for the process. 

Let $X(k,t) = X_n(k,t)$, $0\le k\le n$, be the number of cells in the
population which has undergone \emph{exactly} $k$ mitosis at time
$t$. In the tree formulation, $X(k,t)$ represents the number of nodes
in the $k$th generation with no descendants, called the active nodes
of the $k$th generation. The cell population consists therefore of the
active nodes.  It follows from the previous assumptions that
$X(t)=(X(0,t), \dots, X(n,t))$ is a Markov chain in the state space
\begin{equation*}
\Omega = \{ (x_0,\dots,x_n) \in \mathbb{Z}_+^{n+1}\;|\; 
\sum_{i=0}^n 2^{n-i}x_i=2^n \}
\end{equation*}
with generator $Q$ given by
\begin{equation}
\label{eq:markov.00}
(Qg)(\mb x) := \sum_{k=0}^{n-1} \lambda_k \, x_k 
\big[ \, g(T_k \mb x) - g( \mb x ) \, \big] \;,
\end{equation}
where $\mb x = (x_0,\dots,x_n)$, $T_k \mb x = (x_0,\dots, x_k-1,
x_{k+1}+2, \dots ,x_n)$, $\lambda_k = \beta (n-k)/n$, $0\le k\le n-1$,
and $g: \Omega\to \bb R$ is a generic function.

Denote by $\bb X_n(t)$ the size of the population at time $t$
renormalized by $2^n$:
\begin{equation*}
\bb X_n(t) \;=\; 2^{-n} \sum_{k=0} ^{n} X(k,t)\;,
\end{equation*}
by $x_k(t)$, $0\le k\le n$, $t\ge 0$, the expected number of cells
which undergone $k$ mitosis at time $t$:
\begin{equation*}
x_k(t) = E \big[ X(k,t)\, \big] \quad\text{and let}\quad  
{\mathbf x}(t)=\begin{bmatrix} x_0(t),\cdots,x_n(t)\end{bmatrix}^T\;. 
\end{equation*}
Denote by $S_n(t)$ the expected number of cells:
\begin{equation}
\label{02}
S_n(t) \;:=\; 2^{-n} \sum_{i=0} ^{n} x_i(t)\;.
\end{equation}
As before, in the tree formulation, $x_k(t)$ represents the expected
number of active nodes in the $k$-th generation and $S_n(t)$ the
expected number of active nodes, which corresponds to the expected
size of the inner boundary of the tree.  Let
\begin{equation}
\label{03}
T_n \;=\; \frac n\beta \ln\Big( \frac n {\ln 4} \Big) \; +\;
\frac{\ln 2}{2 \beta}\; \cdot
\end{equation}
We show in \eqref{01} that $T_n$ is at distance $O(n^{-1})$ from the
time at which the expected number of active nodes is equal to one half
of the final size.

As in \cite{AS}, consider the progress of the tree along the leftmost
branch. Let $H^n_k$ the time of the $k$-th division, so that $H^n_k =
\sum_{0\le i <k} \lambda_i^{-1} \mf e_i$, where $\mf e_i$ are
i.i.d. mean one exponential random variables. In particular, $E[H^n_k]
= (n/\beta) \sum_{n-k+1\le i\le n} i^{-1} \approx (n/\beta) \ln
[n/(n-k+1)]$, which means that that $H^n_k \approx T_n$ only for
$k\approx n$.  Hence, in the time scale $T_n$ a cell is close to
senescence.

Let 
\begin{equation*}
\mb X_n(\tau) \;=\; \bb X(T_n+n\tau) \;, \quad \tau\in\bb R\;,
\end{equation*}
be the total population rescaled in size and time. The main result of
this article states that for each $\tau\in\bb R$, $\mb X_n(\tau)$
converges in probability as $n\uparrow\infty$:

\begin{theorem}
\label{s00}
The process $\mb X_n(\tau)$, $\tau\in\bb R$, converges in distribution
to the Dirac measure concentrated on the function $\bb G(\tau) = \exp
(- (\ln 2) \, e^{-\beta \tau} )$, the Gompertz curve with growth rate
$\beta$.  In particular, for every $\tau\in\bb R$, $\mb X_n(\tau)$
converges in probability, as $n\uparrow\infty$, to $\bb G(\tau)$:
\begin{equation*}
\lim_{n\to\infty} \mb X_n(\tau) \;=\; \bb G(\tau) \;=:\;
\exp \big(- (\ln 2) \, e^{-\beta \tau} \big)\;.
\end{equation*}
\end{theorem}

We also prove a central limit theorem for the martingale associated to
$\mb X_n(\tau)$.  For $\tau\ge 0$ and $n\ge 1$, let $M_n$ be the
martingale given by
\begin{equation*}
\begin{split}
& M_n(\tau) \;=\; 2^{n/2} \Big( \mb X_n(\tau) - \mb X_n(0) 
- \int_0^\tau V_n (s) \, ds \Big) \;, \\
&\quad \text{where}\quad 
V_n (s) \;=\; n (Q \bb X_n) (T_n + ns)
\;=\; \frac n{2^n} \, \sum_{j=0}^n \lambda_j 
\, X(j,T_n + ns)\;.
\end{split}
\end{equation*}

\begin{theorem}
\label{s06}
The martingale $M_n(\tau)$, $\tau\ge 0$, converges in distribution to
$B_{v(\tau)}$, where $B$ is a Brownian motion and $v(\tau) = \exp (-
(\ln 2) \, e^{-\beta \tau}) - (1/2)$.
\end{theorem}

The proof of Theorem \ref{s00} relies on an explicit expression for
the expected number of cells. Let $W_n$ be the rescaled growth curve
\begin{eqnarray*} 
W_n(\tau) \;=\; S_n(T_n+n\,\tau) \;=\; E \big[ \mb X_n(\tau) \big]
\;, \quad \tau\in \bb R\;.
\end{eqnarray*}
The rescaled growth curve $W_n$ converges pointwisely (and therefore
uniformly) to the Gompertz curve with growth rate $\beta$. 

\begin{proposition}
\label{s02}
For every $t>0$ and every $\tau\in\bb R$,
\begin{equation*}
S_n(t) \;=\;\left(1-\frac{e^{-\beta t/n}}{2}\right)^n\;,
\quad  
\lim_{n\to\infty} W_n(\tau) \;=\; 
\exp \big(- (\ln 2) \, e^{-\beta \tau} \big)\;.
\end{equation*}
\end{proposition}

The proof of this result provides in fact an expansion in $n^{-1}$ of
$\ln W_n(\tau)$.  We may also compute the asymptotic behavior of the
time derivative of the rescaled growth curve $W_n$.  A simple
computation with the generator gives that
\begin{equation}
\label{06}
\frac d{dt} S_n(t) \;=\; \frac{\beta}{2^n} 
\sum_{i=0}^{n} \frac{(n-i)}n \, x_i(t)\;.
\end{equation}
We show in Section \ref{sec:tt} that
\begin{equation}
\label{05}
(\frac d{dt} \ln S_n)(t) \;=\; 
\frac{\beta\, e^{-\beta t/n}}{2-e^{-\beta t/n}}\;, \quad n\ge 1\;,
\;\; t>0\;.
\end{equation}
The next result follows from this identity and Proposition \ref{s02}.

\begin{proposition}
\label{s01}
For every $\tau\in\bb R$,
\begin{equation*}
\lim_{n\to\infty} (\frac d{d\tau} \ln W_n)(\tau)
\;=\;  (\ln 2) \, \beta \, e^{-\beta \tau}\;.
\end{equation*}
\end{proposition}

This result has an interpretation in our biological model.  In
cultured cells, telomere length is not evaluated
individually. Instead, what is actually measured is the \emph{mean}
telomere length of a colony of cells ~\cite{Baxter}. Therefore, in
order to verify the fit of the stochastic model to real data, we must
obtain the expected mean telomere length predicted by the model.

The asymptotic telomere length may be estimated in two different
regimes. The telomere length of a cell which has undergone $k$ mitosis
is $L_{\min}+(n-k)\delta$.  Hence, the total telomere length at
time $t$ is $\sum_{0\le k\le n} [L_{\min}+(n-k)\delta]\, X(k,t)$.  Let
$\ell_n(t)$ be the expectation of the total telomere length and let
$L_n(t)$ be the average telomere length:
\begin{equation*}
\ell_n(t) \;=\; E\Big[ \sum_{k=0}^n
\big[L_{\min} + (n-k)\, \delta \big]\, X(k,t) \Big]\; ,
\quad L_n(t) \;=\; \frac{\ell_n(t)}{2^n S_n(t)}\;\cdot
\end{equation*}
Proposition \ref{s03} below follows from identity \eqref{05} and
Proposition \ref{s01}. Its proof is presented at the end of Section
\ref{sec:tt}.

\begin{proposition}
\label{s03}
For every $t\ge 0$, $\tau\in\bb R$,
\begin{equation*}
\begin{split}
& L_n(t n) \;=\; L_{\min} \;+\; (L_0-L_{\min})
\frac{e^{-\beta t}}{2-e^{-\beta t}}\;, \quad n\ge 1\;, \\
&\quad \lim_{n\to\infty} n \, \{ L_n(T_n + \tau n) - L_{\min}\}
\;=\;  (\ln 2) \, (L_0-L_{\min}) \, e^{-\beta \tau}\;.
\end{split}
\end{equation*}
\end{proposition}

\section{The dynamics of the stochastic model}
\label{sec:din}

We prove in this section Proposition \ref{s02}.  By Kolmogorov's
forward equation
\begin{equation}
\label{f01}
\frac{d}{dt}{\mathbf x} (t) \;=\; \mathbf{M} \, \mb  x (t) \;,
\end{equation}
where $\mathbf{M}$ is the square matrix with entries $m_{i,j}$, $0\le
i,j \le n$, given by
\begin{equation}
\label{eq:def.m}
m_{i,j}=
\begin{cases}
-\lambda_{i}, &j=i,\\
2\lambda_{i-1}, &j=i-1, \\
0&\mbox{otherwise}   
\end{cases}
\end{equation}

The solution of the previous linear ordinary differential equation
with initial condition $\mb x (0)= \begin{bmatrix} 1, 0, \cdots,
  0\end{bmatrix}^T$ is ${\mathbf x} (t)=\exp(t\mathbf{M})\, \mb x
(0)$, $t\ge 0$.  Moreover, the expected number of cells at time $t$
can be calculated as the matrix product of the vector $[1,1,\dots,1]$
with $\mb x(t)$:
\[
\sum_{i=0} ^n x_i(t) \;=\; [1,1,\dots,1] \, {\mathbf x} (t)
\;=\; [1,1,\dots,1] \exp(t\mathbf{M})\, \mb x (0) \;.
\]
 
The proof of Proposition \ref{s01} relies on the analysis of the
matrix $\mb M$, presented in Lemma \ref{lm:eig} and in Corollary
\ref{cr:aux} below.  Since $\mb M$ is lower-triangular, the spectrum
of $\mb M$ is $-\lambda_0, -\lambda_1, \cdots, -\lambda_n$. For
$\gamma\in\real$ and $0\le i,k \le n$, let
\begin{equation}
  \label{eq:def.eig.basis}
  a(i,k,\gamma)=
  {a(\gamma)}_{i,k} =
\begin{cases}
 0 & i<k,\\
\displaystyle  \gamma^{i-k}\binom{n-k}{i-k}& i\geq k \;. \\
\end{cases}
\end{equation}

\begin{lemma}
\label{lm:eig}
The vector $\mathbf{w}_k\in \real^{n+1}$, $0\le k\le n$, defined by
\begin{equation}
\label{eq:def.eig.k}
\mathbf{w}_k=
\begin{bmatrix}
w_{0,k}, w_{1,k},\cdots, w_{n,k}
\end{bmatrix}^T \;, \qquad w_{i,k}={a(-2)}_{i,k} \;,
\end{equation}
is a right-side eigenvector of $\mathbf{M}$ corresponding to the
eigenvalues $-\lambda_{k}$.
\end{lemma}

\begin{proof}
Fix $0\le k\le n$. To prove this lemma we must evaluate
\[\mathbf{b}\;=\;
(\mathbf{M}+\lambda_{k}\mathbf{I})\, \mathbf{w}_k
\]
where $\mathbf{I}$ is the identity matrix.  Using the fact that
$\mathbf{M}$ is lower-triangular we get from the definition of $\mb
w_k$ that $b_{i}=0$ for $i<k$.  For $i=k$, since
$m_{k,k}=-\lambda_{k}$, $b_{k}=(m_{k,k}+\lambda_{k})\, w_{k,k}=0$.
This proves the lemma for $k=n$.

For $0\le k <n$ and $i>k$,
\[
b_{i} = m_{i,i-1}\;w_{i-1,k}+(m_{i,i}+\lambda_{k})\, w_{i,k}
\]
By \eqref{eq:def.m} and by the definition of $\lambda_i$, we have
\begin{eqnarray*}
b_{i}&=& 2\lambda_{i-1}\,w_{i-1,k} \;+\; (-\lambda_{i}+\lambda_{k})\, w_{i,k}\\
&=&\frac{\beta}{n} \, \big[ 2(n+1-i)w_{i-1,k} \,+\, (i-k)w_{i,k} \big]\\
&=&\frac{\beta}{n}\, 2(n+1-i)\,
\big[ w_{i-1,k} + \frac{i-k}{2(n+1-i)}w_{i,k}\big]\;.
\end{eqnarray*}
According to \eqref{eq:def.eig.basis}, for $k<i\leq n$,
\[
\frac{w_{i-1,k}}{w_{i,k}} \;=\; \frac{a(-2)_{i-1,k} }{a(-2)_{i,k}}
\;=\;\frac{i-k}{-2\, (n+1-i)}\;\cdot
\]
This proves the lemma.
\end{proof}

Define the square matrices $\mathbf{A}(\gamma)$, $\mathbf{D}$ by
\begin{equation}
\label{eq:def.ad}
\mathbf{A}(\gamma)=\{a(\gamma)_{i,j}\}\; , \quad
\mathbf{D}=\mathrm{diag}\{-\lambda_0,\dots,-\lambda_{n-1},-\lambda_n\}
\;.  
\end{equation}
In view of the previous lemma, 
\[
\mathbf{M}=\mathbf{A}(-2)\, \mathbf{D}\,\left[\mathbf{A}(-2)\right]^{-1}
\]
so that
\begin{equation}
\label{exptm}
\exp(t\mathbf{M})=\mathbf{A}(-2)\ 
\exp(t\mathbf{D})\,\left[\mathbf{A}(-2)\right]^{-1}\;, \quad t\ge 0\;.
\end{equation}
To evaluate this expression we will need two auxiliary results.

\begin{lemma}
\label{lm:aux}
For any $\eta\in \real$,
\[
[\eta^n,\eta^{n-1},\dots, \eta, 1] \, \mathbf{A}(\gamma) \;=\;
[(\eta+\gamma)^n,(\eta+\gamma)^{n-1},\dots, (\eta+\gamma), 1]\,.
\]
\end{lemma}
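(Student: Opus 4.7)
The plan is a direct coordinate-wise computation. I would compute the $k$-th entry of the row vector on the left-hand side for each $k \in \{1,\ldots,n+1\}$, and verify it equals $(\eta+\gamma)^{n+1-k}$, which is exactly the $k$-th entry of the right-hand side.

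First I would unfold the definitions. The $i$-th component of the row $[\eta^n,\eta^{n-1},\dots,\eta,1]$ is $\eta^{n+1-i}$, and by \eqref{eq:def.eig.basis} the $(i,k)$-entry of $\mathbf{A}(\gamma)$ vanishes for $i<k$ and equals $\gamma^{i-k}\binom{n+1-k}{i-k}$ for $i\geq k$. Hence the $k$-th entry of the product is
\[
\sum_{i=1}^{n+1} \eta^{n+1-i}\, a(\gamma)_{i,k} \;=\; \sum_{i=k}^{n+1} \eta^{n+1-i}\,\gamma^{i-k}\binom{n+1-k}{i-k}.
\]

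Next I would perform the change of summation index $j=i-k$, so that $j$ runs from $0$ to $n+1-k$, and $n+1-i=(n+1-k)-j$. The sum becomes
\[
\sum_{j=0}^{n+1-k} \binom{n+1-k}{j}\,\eta^{(n+1-k)-j}\,\gamma^{j} \;=\; (\eta+\gamma)^{n+1-k},
\]
by the binomial theorem. This is precisely the $k$-th entry of $[(\eta+\gamma)^n,(\eta+\gamma)^{n-1},\dots,(\eta+\gamma),1]$, and since $k$ was arbitrary the lemma follows.

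There is no real obstacle here: the only non-trivial step is recognizing that the change of index converts the sum into a standard binomial expansion. The triangular structure of $\mathbf{A}(\gamma)$ ensures the sum starts at $i=k$, which is exactly what makes the binomial coefficient $\binom{n+1-k}{i-k}$ have the right range.
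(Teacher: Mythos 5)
Your proof is correct and follows exactly the same route as the paper's: expand the $k$-th entry of the product using the triangular structure of $\mathbf{A}(\gamma)$, substitute $j=i-k$, and apply the binomial theorem to obtain $(\eta+\gamma)^{n+1-k}$. No differences worth noting.
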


\begin{proof}
Let 
\[
\begin{bmatrix}
u_0,\cdots,u_{n}
\end{bmatrix}
\;=\; [\eta^n,\eta^{n-1},\dots, \eta, 1]
\, \mathbf{A}(\gamma) \;.
\] 
A direct calculation, together with \eqref{eq:def.eig.basis} yields
\begin{equation*}
u_k \;=\; \sum_{i=0} ^{n} \eta^{n-i}\; a_{i,k}(\gamma)
\;=\; \sum_{i=k} ^{n} \eta^{n-i}\; \gamma^{i-k}\;
\binom{n-k}{i-k} \;=\; (\eta+\gamma)^{n-k} \;.
\end{equation*}
This concludes the proof of the lemma.
\end{proof}

\begin{corollary}
\label{cr:aux}
For any $\gamma,\mu\in\real$, $\mathbf{A}(\gamma)
\mathbf{A}(\mu)=\mathbf{A}(\gamma+\mu)$, and
$\mathbf{A}(0)=\mathbf{I}$. In particular
$[\mathbf{A}(\gamma)]^{-1}=\mathbf{A}(-\gamma)$.
\end{corollary}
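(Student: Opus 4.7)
The plan is to derive all three claims as quick consequences of Lemma~\ref{lm:aux}, using a Vandermonde density argument to turn the lemma's row--vector identity into a matrix identity.

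First I would prove the semigroup identity $\mathbf{A}(\gamma)\mathbf{A}(\mu)=\mathbf{A}(\gamma+\mu)$. The idea is to apply Lemma~\ref{lm:aux} twice. For any $\eta\in\real$,
\[
[\eta^n,\eta^{n-1},\dots,\eta,1]\,\mathbf{A}(\gamma)\,\mathbf{A}(\mu)
=[(\eta+\gamma)^n,\dots,(\eta+\gamma),1]\,\mathbf{A}(\mu)
=[(\eta+\gamma+\mu)^n,\dots,1],
\]
and by a single application of the lemma the same row vector equals $[\eta^n,\dots,1]\,\mathbf{A}(\gamma+\mu)$. Hence $\mathbf{A}(\gamma)\mathbf{A}(\mu)$ and $\mathbf{A}(\gamma+\mu)$ agree when multiplied on the left by every row of the form $[\eta^n,\dots,\eta,1]$. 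Choosing $n+1$ distinct values of $\eta$, the resulting rows form a Vandermonde matrix, which is invertible, so these rows span $\real^{n+1}$. Two matrices agreeing on a spanning set of left multipliers must be equal, which gives the semigroup identity.

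Next I would check $\mathbf{A}(0)=\mathbf{I}$ directly from the definition \eqref{eq:def.eig.basis}. The entries with $i<k$ vanish by definition; for $i\geq k$, $a(i,k,0)=0^{i-k}\binom{n+1-k}{i-k}$, which equals $1$ when $i=k$ (the $0^0=1$ empty-product convention, consistent with the binomial theorem used in Lemma~\ref{lm:aux}) and vanishes when $i>k$. So $\mathbf{A}(0)$ is the identity.

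Finally, the inverse formula is immediate: setting $\mu=-\gamma$ in the semigroup identity yields $\mathbf{A}(\gamma)\mathbf{A}(-\gamma)=\mathbf{A}(0)=\mathbf{I}$, so $\mathbf{A}(\gamma)$ is invertible with $[\mathbf{A}(\gamma)]^{-1}=\mathbf{A}(-\gamma)$. The only delicate point in the argument is the Vandermonde span step in the first paragraph; everything else is just bookkeeping from the lemma and the definition.
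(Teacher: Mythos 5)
Your proof is correct and follows essentially the same route as the paper: derive the semigroup law from Lemma~\ref{lm:aux}, get $\mathbf{A}(0)=\mathbf{I}$ from the definition, and combine the two for the inverse. The only difference is that the paper calls the first step ``trivial'' while you supply the missing justification (the Vandermonde span argument needed to pass from the row-vector identity to the matrix identity), which is a welcome addition rather than a divergence.
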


\begin{proof}
By Lemma~\ref{lm:aux}, for any $\eta\in \bb R$,
\[
[\eta^n,\eta^{n-1},\dots, \eta, 1] \, \mathbf{A}(\gamma) 
\, \mathbf{A}(\mu) \;=\;
[\eta^n,\eta^{n-1},\dots, \eta, 1]\, \mathbf{A}(\gamma+\mu)\;.
\]
The first assertion of the Corollary follows from this identity and
from the fact that the span of $\{ [\eta^n,\eta^{n-1},\dots, \eta,
1]\;:\;\eta\in \real\}$ is $\real^{n+1}$.  The second claim of the
corollary follows from the definition of $\mathbf{A}(\gamma)$.
\end{proof}

\noindent{\sl Proof of Proposition \ref{s02}.}
By \eqref{exptm} and by the previous corollary,
\begin{equation}
\label{eq:xt}
\mathbf{x}(t) \;=\; \mathbf{A}(-2)\,
\exp(t\mathbf{D})\, \mathbf{A}(2) \, \mb x(0)\;.
\end{equation}
Since $e^{-t\lambda_k}=(e^{-t\beta/n})^{n-k}$, $0\le k\le n$, by
definition of the diagonal matrix $\mb D$, 
\begin{equation*}
\exp(t\mathbf{D}) \;=\; 
\mathrm{diag}\{(e^{-t\beta/n})^n,\cdots,(e^{-t\beta/n})^1,1\}\;.
\end{equation*}
Hence,
\begin{equation}
\label{eq:to.evaluate}
\sum_{i=0} ^n x_i(t) \;=\; [1,1,\dots,1] \, \mathbf{A}(-2)\,
\exp(t\mathbf{D}) \,\mathbf{A}(2) \, \mb x(0) \;.
\end{equation}
Applying twice Lemma~\ref{lm:aux} we get that
\begin{equation*}
\begin{split}
& [1,1,\dots,1]\, \mathbf{A}(-2) \, \exp(t\mathbf{D}) \,\mathbf{A}(2)
\\ 
&\qquad = \; [(2-e^{-\beta t/n} )^n, (2-e^{-\beta t/n})^{n-1}, 
\dots, 2-e^{-\beta t/n},1]\;,  
\end{split}
\end{equation*}
so that
\begin{equation*}
\sum_{i=0} ^{n} x_i(t) \;=\; (2-e^{-\beta t/n})^n
\;=\; 2^n\;\left(1-\frac{e^{-\beta t/n}}{2}\right)^n\;.
\end{equation*}
This concludes the proof of the first assertion of the theorem.

Recall that we denote by $W_n(\tau)$ the normalized growth curve.  It
follows from the definition of $S_n(t)$ that
\begin{equation}
\label{eq:ufa}
W_n(\tau) \;=\;
\left(1-\frac{e^{-\beta \tau}\theta_n\ln 2}{n} \right)^n\;,
\end{equation}
where $\theta_n= 2^{-(1/2n)}$. It remains to let
$n\uparrow\infty$. \qed \smallskip

We conclude this section showing that $T_n$ is at distance $n^{-1}$
from the time at which the expected number of active nodes is equal to
one half of the final size.

Denote by $t_*$ the time at which the expected population size is half
of the final size:
\[
n\ln \left(1-\frac{e^{-\beta t_*/n}}{2}\right) \;=\; -\ln 2\;.
\]
We claim that
\begin{equation}
\label{01}
t_* \;=\; \frac n\beta \ln\Big( \frac n {\ln 4} \Big) \; +\;
\frac{\ln 2}{2 \beta} \; +\; O(n^{-1})\; .
\end{equation}

Indeed, dividing both sides of the penultimate formula by $n$ and
taking exponentials on both sides of this equation, we obtain that
\[
e^{-\beta t_*/n} \;=\; 2\left(1-e^{-\ln(2)/n}\right)\;.
\]
By Taylor expansion and setting $t_*=(n/\beta)\ln(n/\ln 4)+h_*$ we
obtain
\[
\frac{\ln 4}{n} \, e^{-\beta h_*/n} \;=\;
2\, \Big( \frac{\ln 2}{n}-\frac{1}{2} 
\Big(\frac{\ln 2}{n} \Big)^2+O(n^{-3})\Big)\,.
\]
Hence
\[
e^{-\beta h_*/n}\;=\; 1 \;-\; \frac{\ln 2}{2n} \;+\; O(n^{-2})\,.
\]
Taking logarithms on both sides, by Taylor expansion,
\[ 
-\frac{\beta}{n}h_* \;=\; -\frac{\ln 2}{2n} \;+\; O(n^{-2})\,.
\]
This proves claim \eqref{01}.

\section{A generating function}
\label{sec:tt}

We prove in this section Proposition \ref{s01} and identity
\eqref{05}.  Let
\begin{equation}
\label{eq:def.Psi}
\Psi(u,t)\;=\; \sum_{k=0} ^n u^{n-k} X(k,t) \;,\quad u\;, t \in\bb R_+\;,
\end{equation}
and denote by $\psi(u,t)$ be the expected value of $\Psi(u,t)$:
\begin{equation}
\label{eq:def.psi}
\psi(u,t) \;=\; E[ \Psi(u,t) ] \;=\; 
\sum_{k=0} ^n u^{n-k}  E[X(k,t)] \;=\; \sum_{k=0} ^n u^{n-k} x_k(t)\;.
\end{equation}
Hence, by \eqref{eq:xt},
\begin{equation*}
\begin{split}
\psi(u,t) \; &=\; [u^n, u^{n-1},\cdots,u,1]\;\mathbf{x}(t) \\
\;& =\; [u^n, u^{n-1},\cdots,u,1]\,\mathbf{A}(-2)\,\exp(t\mathbf{D})\,
\mathbf{A}(2) \mathbf{x}(0)\;.
\end{split}
\end{equation*}
By Lemma~\ref{lm:aux} and by the formula \eqref{eq:def.ad} for the
diagonal matrix $\mathbf{D}$, 
\begin{equation}
\label{eq:psi}
\psi(u,t)\;=\; (2+(u-2)e^{-\beta t/n})^n\;.  
\end{equation}

On the other hand,
\begin{equation}
\label{07}
\begin{split}
& E\Big[\sum_{k=0} ^nX(k,t)\Big]\;=\; 
\sum_{k=0}^n x_k(t)\;=\; \psi(1,t)\;,\\
&\quad E\Big[ \sum_{k=0} ^n (n-k)X(k,t)\Big] \;=\; 
\sum_{k=0} ^n (n-k) x_k(t) \;=\; \psi_u(1,t)\;,
\end{split}
\end{equation}
where $\psi_u$ stands for the partial derivative of $\psi$ with
respect to the first variable.  

It follows from \eqref{02}, \eqref{06}, the previous equations and
\eqref{eq:psi} that
\begin{equation*}
(\frac d{dt} \ln S_n)(t) \;=\; \frac{\beta}n\, 
\frac{\psi_u(1,t)}{\psi(1,t)} \;=\;
\frac{\beta\, e^{-\beta t/n}}{2-e^{-\beta t/n}}
\end{equation*}
for $n\ge 1$, $t>0$.  This is the identity presented in \eqref{05}.

Recall the definition of $L_n(t)$, the average telomere length at time
$t$, introduced just before the statement Proposition \ref{s03}. It
follows from \eqref{eq:psi} and \eqref{07} that
\begin{align*}
\ell_n(t) \; &=\;
E\Big[  \sum_{k=0} ^n \big\{ L_{\min}+(n-k)\delta\big\}\, X(k,t) \Big] \\
& =\; L_{\min}\, (2-e^{-\beta t/n})^n \;+\;
\delta\, n \,(2-e^{-\beta t/n})^{n-1}\, e^{-\beta t/n}\; ,
\end{align*}
and
\begin{equation*}
L_n(t)\;=\; L_{\min} \;+\; (L_0-L_{\min})\,
\frac {e^{-\beta t/n}}{2-e^{-\beta t/n}}
\end{equation*}
because $\delta n = L_0-L_{\min}$.  The assertions of Proposition
\ref{s03} are a straightforward consequence of the previous formula.

\section{Proof of Theorem \ref{s00}}
\label{sec6}

We prove in this section Theorem \ref{s00} by estimating the
covariances of the process $\mb X_n(t)$. Let
\begin{equation*}
F_j(t) \;=\;  E[X(j,t)]\;, \quad 0\le j\le n\;, 
\end{equation*}
and let $F_{j,k}(t)$ be the covariance between $X(j,t)$ and $X(k,t)$:
\begin{equation*}
F_{j,k}(t) \;=\; E\big[ X(j,t)\, X(k,t) \big] \;-\;
E\big[ X(j,t) \big]\, E\big[ X(k,t) \big]\;, \quad
0\le j\le k\le n \;.
\end{equation*}
Let $F(t)$ be the column vector with $m=(n+1)(n+2)/2$ coordinates
given by 
\begin{equation*}
\begin{split}
& F(t) \;=\; \\
& \big[ F_{0,0}(t)\,,\, F_{0,1}(t)\,,\, F_{1,1}(t)\,,\, \dots\,,\,
F_{0,j}(t)\,,\, F_{1,j}(t)\,,\,\dots\,,\, F_{j,j}(t)\,,\, \dots\,,\, 
F_{n,n}(t)\big]^T \;.
\end{split}
\end{equation*}
An elementary computation shows that
\begin{equation}
\label{08}
\frac d{dt} F(t) \;=\; \Sigma F(t) \;+\; G(t)\;,
\end{equation}
where $\Sigma$ is the $m\times m$ matrix given by
\begin{equation*}
\Sigma \;=\; 
\begin{bmatrix}
M_0&  0 & 0 & 0 & 0 & 0 \\
D_1&  M_1 & 0 & 0 & 0 & 0 \\
0 & D_2 & M_2 & 0 & 0 & 0  \\
 &   & \ddots & \ddots &   &  \\     
0 & 0 & 0  & D_{n-1} & M_{n-1}  & 0 \\ 
0 & 0 & 0 & 0 & D_n & M_n  
\end{bmatrix}
\;,
\end{equation*}
$M_j$ is a square $(j+1)\times (j+1)$ matrix with entries $M_j(a,b)$,
$0\le a, b\le j$, given by
\begin{equation*}
\begin{split}
& M_j(a,a) \;=\; -(\lambda_a + \lambda_j)\;, \quad 0\le a\le j\;, 
\quad M_j(j,j-1) \;=\; 4\lambda_{j-1}\;, \\
& \quad M_j(a+1,a) \;=\; 2\lambda_a \;, \quad 0\le a\le j-2\;, 
\quad M_j(a,b) \;=\; 0 \text{ otherwise}\;:
\end{split}
\end{equation*}
\begin{equation*}
M_j \;=\;
\begin{bmatrix}
-(\lambda_0+\lambda_j)&  0 & 0 & 0 & 0 & 0 \\
2\lambda_0 & -(\lambda_1+\lambda_j)  & 0 & 0 & 0 & 0 \\
0 & 2\lambda_1 & -(\lambda_2+\lambda_j) & 0 & 0 & 0  \\
 &   & \ddots & \ddots &   &  \\     
0 & 0 & 0  & 2\lambda_{j-2}  & -(\lambda_{j-1}+\lambda_j)& 0 \\ 
0 & 0 & 0 & 0 & 4\lambda_{j-1}  & -2\lambda_j
\end{bmatrix}
\;,
\end{equation*}
and $D_j$ is the $(j+1)\times j$ matrix whose first $j$ lines form the
matrix $2\lambda_{j-1} I_j$, where $I_j$ is the $j\times j$ identity,
and whose last line has only zeros. Moreover, $G(t)$ is the vector $[
G_{0}(t)\,,\, G_{1}(t)\,,\, \dots\,,\, G_{n}(t)]^T$ and $G_j(t)$ is
the column vector with $j+1$ entries given by
\begin{equation*}
\begin{split}
& G_0(t) \;=\; \big[\lambda_0 F_0(t)\big]^T\;, \quad G_1(t)\;=\; 
\big [-2\lambda_0 F_0(t) \,,\, 4\lambda_0 F_0(t) + \lambda_1 F_1(t)\big]^T\;,\\
& \quad G_j(t)\;=\; \big[0, \dots, 0, -2\lambda_{j-1}
F_{j-1}(t) \,,\, 4\lambda_{j-1} F_{j-1}(t) + \lambda_j F_j(t)\big]^T
\;,\;\; 2\le j\le n-1\;, \\
&\qquad G_n(t)\;=\; \big[0, \dots, 0, -2\lambda_{n-1}
F_{n-1}(t) \,,\, 4\lambda_{n-1} F_{n-1}(t) \big]^T \;.
\end{split}
\end{equation*}

\begin{lemma}
\label{s08}
The matrix $\Sigma$ is diagonalizable.
\end{lemma}

\begin{proof}
The eigenvalues of $\Sigma$ are $-(\lambda_0+\lambda_j)$,
$-(\lambda_n+\lambda_j)$, $0\le j\le n$. The eigenvalues
$-(\lambda_j+\lambda_0)$, $0\le j\le n$, have multiplicity
$u_j=\lfloor j/2\rfloor +1$, and the eigenvalues
$-(\lambda_j+\lambda_n)$, $0\le j\le n$, have multiplicity $v_j=
\lfloor (n-j)/2 \rfloor +1$, where $\lfloor a\rfloor$ stands for the
integer part of $a$.

Fix $0\le j\le n$ and assume without loss of generality that $j$ is
even, $j=2k$. Consider the matrix $\Sigma_j = \Sigma +
(\lambda_0+\lambda_j) I_m$, where $I_m$ is the $m\times m$
identity. This matrix has $u_j$ zeros on the diagonal. Starting from
the bottom of the matrix, the first zero appears at the position
$(0,0)$ of the matrix $M_j$, the second one at the position $(1,1)$ of
the matrix $M_{j-1}$, and the last one at the position $(k,k)$ of the
Matrix $M_{j-k}$.

We claim that we may reduce the matrix $\Sigma_j$ to obtain a matrix
such that all lines where the matrix $\Sigma_j$ has an entry on the
diagonal equal to zero become identically equal to zero. The assertion
of the lemma follows from this claim.

This reduction is performed recursively. Let $(\ell_1, \ell_1)$ be the
position of the upmost zero in the diagonal and keep in mind that we
start counting from $0$ which means that the position $(\ell_1,
\ell_1)$ indicates in reality the $(\ell_1+1)$-th line and column.
This entry corresponds to the entry $(k,k)$ of the matrix $M_{j-k}$.
We may first reduce the matrix $\Sigma_j$ to obtain a matrix
$\Sigma_j^{(1)}$ which coincides with $\Sigma_j$ on the south-east
square matrix corresponding to the entries $\{\ell_1, \dots , m-1\}
\times \{\ell_1, \dots , m-1\}$, whose restriction to the north-west
square matrix corresponding to the entries $\{0, \dots,
\ell_1-1\}\times \{0, \dots, \ell_1-1\}$ is the identity, and whose
entries vanish on the remaining two parts of the matrix. Notice that
the $(\ell_1+1)$-th line of $\Sigma_j^{(1)}$ vanishes, and that the
entries $(i,i)$, $0\le i\le k-1$, of the Matrix $D_{j-k+1}$ also
vanish.

Denote by $(\ell_2, \ell_2)$ the coordinates of the second upmost zero
in the diagonal of the matrix $\Sigma_j$ and by $(m_2,m_2)$ the
coordinates of the first entry of the matrix $M_{j-k+1}$, so that
$m_2\le\ell_2$.  We may reduce the matrix $\Sigma_j^{(1)}$ using the
diagonal entries of the matrix $M_{j-k+1}$ above the zero entry to
obtain a new matrix $\Sigma_j^{(2)}$ with the following
properties. Only the entries with coordinates in $\{m_2, \dots, m\}
\times \{m_2, \dots, \ell_2-1\}$ have been changed. The restriction of
$M_{j-k+1}$ to the square $\{m_2, \dots, \ell_2-1\} \times \{m_2,
\dots, \ell_2-1\}$ is the identity and all entries below the diagonal
of this matrix are zeros. Note that the line of $\Sigma_j^{(2)}$
corresponding to the second upmost zero in the diagonal of the matrix
$\Sigma_j$ vanishes, and that the entries $(i,i)$, $0\le i\le k-2$, of
the Matrix $D_{j-k+2}$ also vanish. We may therefore repeat the
argument and conclude the proof of the claim.

Since the same argument applies to the eigenvalues
$-(\lambda_n+\lambda_j)$, $0\le j\le n$, the lemma is proved.
\end{proof}

\begin{proposition}
\label{s09}
For all $\tau\in\bb R$,
\begin{equation*}
\lim_{n\to\infty} \text{\rm Var} \Big[ \frac 1{2^n} 
\sum_{k=0}^n X(k, T_n + n\tau)
\Big]\;=\; 0\;.
\end{equation*}
\end{proposition}

\begin{proof}
Let $v$ be the vector $[v_0, v_1, \dots, v_n]$, where $v_0 = [1]$, and
$v_j$ is the vector with $j+1$ coordinates given by $v_j = [2, \dots,
2,1]$.  It is well known that the solution of \eqref{08} with initial
condition $F(0)=0$ is given by
\begin{equation}
\label{23}
F(t) \;=\; \int_0^t e^{(t-s)\Sigma} G(s) \, ds\;.
\end{equation}
Therefore,
\begin{equation*}
\text{\rm Var} \Big[ \frac 1{2^n} \sum_{k=0}^n X(k,t)
\Big] \;=\; \frac 1{4^n} \< v \,,\, F(t)\> 
\;=\; \frac 1{4^n} \int_0^t \< v \,,\, e^{(t-s)\Sigma} G(s)\> \, ds
\;,
\end{equation*}
where $\< \,\cdot\,,\,\cdot\,\>$ stands for the usual inner
product. 

Since, by Lemma \ref{s08}, $\Sigma$ is a diagonalizable matrix and all
its eigenvalues are negative, $\< v \,,\, e^{(t-s)\Sigma} G(s)\>$ is
absolutely bounded by $|v|\, |G(s)|$. Clearly, $|v|^2 =
(n+1)(2n+1)$. On the other hand,
\begin{equation*}
\begin{split}
|G(s)|^2 \; &=\; \big[\lambda_0 F_0(s)\big]^2 \;+\;
\big[4 \lambda_{n-1} F_{n-1}(s)\big]^2 \;+\; 
\sum_{j=0}^{n-1} \big[2 \lambda_{j} F_{j}(s)\big]^2 \\
\;&+\; \sum_{j=0}^{n-2} \big[4 \lambda_{j} F_{j}(s) 
+ \lambda_{j+1} F_{j+1}(s)\big]^2\;.
\end{split}
\end{equation*}
Since $\lambda_j\le \beta$ and each $F_j$ is positive, all terms
inside brackets are bounded by $4\beta \sum_{0\le j\le n}
F_j(s)$. Hence,
\begin{equation*}
|G(s)| \; \le \; 4 \, \beta \, \sqrt{2n+1} S_n(s) \; ,
\end{equation*}
and
\begin{equation*}
\text{\rm Var} \Big[ \frac 1{2^n} \sum_{k=0}^n X(k,T_n + n\tau)
\Big] \;\le \; \frac {4\beta\, (2n+1) \sqrt{n+1} (T_n+n\tau)}{2^n}
\; \frac {S_n(T_n+n\tau)}{2^n} \;,
\end{equation*}
because $S_n$ is an increasing function. The result follows from this
estimate and Proposition \ref{s02}.
\end{proof}

\noindent{\sl Proof of Theorem \ref{s00}.}
It follows from Proposition \ref{s02} and Proposition \ref{s09} that
the finite dimensional distributions of $\mb X_n(t)$ converge to the
finite dimensional distributions of a Dirac measure concentrated on
the Gompertz curve $\exp (- (\ln 2) \, e^{-\beta \tau})$. 

\smallskip\noindent{\bf Tightness.} To conclude the proof of Theorem
\ref{s00}, it remains to show that the process $\mb X_n$ is tight on
each finite interval $[0,T]$. Since $0\le \mb X_n(t) \le 1$ a.s. for
all $t$ and $n$, it is enough \cite{b1} to show that for any $T>0$ and
$\epsilon>0$,
\begin{equation*}
\limsup_{\delta\to 0}
\limsup_{n\to\infty} P\Big[ \, \sup \big| \mb X_n(t)
- \mb X_n(s) \big| > \epsilon \Big]\;=\; 0\;,
\end{equation*}
where the supremum is carried over all $0\le s, t\le T$ such that
$|t-s|\le \delta$. By construction, the process $\mb X_n$ is
increasing. In particular, if $t_i= i \delta$, $0\le i\le
M_\delta=\lfloor \delta^{-1} \rfloor +1$, where $\lfloor a \rfloor$
stands for the integer part of $a$,
\begin{equation*}
P\Big[ \, \sup \big| \mb X_n(t) - \mb X_n(s) \big| > \epsilon \Big]
\;\le\; M_\delta \max_{0\le i\le M_\delta} 
P\Big[ \, \big| \mb X_n(t_{i+1}) - \mb X_n(t_i) \big| > \epsilon/2
\Big]\; .
\end{equation*}
By Proposition \ref{s02}, $W_n$ converges uniformly to the Gompertz
curve, which is uniformly continuous. It is therefore enough to show
that
\begin{equation*}
\limsup_{\delta\to 0} \limsup_{n\to\infty}  M_\delta \max_{0\le i\le M_\delta} 
P\Big[ \, \big| \overline{\mb X}_n(t_{i+1}) - \overline{\mb X}_n(t_i)
\big| > \epsilon/2 \Big]\; ,
\end{equation*}
where $\overline{\mb X}_n(t) = \mb X_n(t) - W_n(t)$. By Chebychev
inequality and by Proposition \ref{s09}, for all $t\in\bb R$, $a>0$, 
\begin{equation*}
\limsup_{n\to\infty} 
P\Big[ \, \big| \overline{\mb X}_n(t)\big| > a \Big]
\;\le\; \limsup_{n\to\infty}  \frac{1}{a^2} \,
E\Big[ \overline{\mb X}_n(t)^2 \Big] \;=\; 0\;.
\end{equation*}
This concludes the proof of the tightness of $\mb X_n(t)$.

\section{Central Limit Theorem}
\label{sec7}

We prove in this last section Theorem \ref{s06}.  Recall the
definition of the martingale $M_n$ and of the process $V_n$.  We rely
in this proof on \cite[Theorem VIII.3.11]{JS}. Let $\Delta M_n(s) =
M_n(s) - M_n(s-)$. By definition of the martingale $M_n$, $\Delta
M_n(s) = 2^{-n/2} \{\mb X_n(s) - \mb X_n(s-)\}$. Since $\mb X_n(s)-
\mb X_n(s-)$ is either $0$ or $1$, for every $t\ge 0$, $\epsilon >0$,
\begin{equation}
\label{l01}
\lim_{n\to\infty} P \Big[ \sup_{s\le t} \big| \Delta M_n(s)
\big| \ge \epsilon \Big] \;=\; 0\;.
\end{equation}

Denote by $\< M_n\>_t$ the predictable quadratic variation of the
martingale $M_n(t)$.  We claim that for each $\tau\ge 0$,
\begin{equation}
\label{l02}
\< M_n\>_\tau \text{ converges in probability as $n\uparrow\infty$ to }
e^{-\ln 2 e^{-\beta \tau}} - \frac 12 \;.
\end{equation}
A straightforward computation shows that
\begin{equation*}
\< M_n\>_\tau \;=\; \frac 1{2^n} \int_{T_n}^{T_n + \tau n} \sum_{j=0}^n 
\lambda_j X(j,s) \, ds\;.
\end{equation*}
By \eqref{07}, uniformly in any compact interval of $\bb R$,
\begin{equation*}
\lim_{n\to\infty} E\Big[ \frac n{2^n} \sum_{j=0}^n \lambda_j 
X(j, T_n + \tau n)\Big] \;=\; \frac{d}{d\tau} e^{-\ln 2 e^{-\beta
    \tau}}\; .
\end{equation*}
Hence, for all $\tau\ge 0$,
\begin{equation*}
\lim_{n\to\infty} E\Big[\< M_n\>_\tau  \Big] \;=\; 
e^{-\ln 2 e^{-\beta \tau}} \,-\, \frac 12 \; \cdot 
\end{equation*}

On the other hand, by Schwarz inequality and with the notation
introduced in Section \ref{sec6},
\begin{equation*}
\begin{split}
& E\Big[ \Big( \< M_n\>_\tau -  E\big[ \< M_n\>_\tau
\big] \Big)^2 \Big] \;=\;
E\Big[ \Big( \frac 1{2^n} \int_{T_n}^{T_n + \tau n} \sum_{j=0}^n 
\lambda_j \{ X(j,s) - x_j(s) \} \, ds \Big)^2\Big] \\
& \qquad \;\le\; \frac{\tau n}{4^n} \int_{T_n}^{T_n+\tau n} 
\Big\{ 2 \sum_{j<k}\lambda_j\lambda_k F_{j,k}(s) +
\sum_{j=0}^n \lambda_j^2 F_{j,j}(s) \Big\} \, ds \;.
\end{split}
\end{equation*}
Let $w= (\lambda_0^2, 2\lambda_0\lambda_1,\lambda_1^2, \dots,
\lambda_n^2)$.  By \eqref{23}, the expression inside braces is equal
to
\begin{equation*}
\<w, F(s)\> \;=\; \int_{0}^{s} 
\big\<w, e^{(s-r) \Sigma} G(r) \big\>  \, ds\;.
\end{equation*}
At this point, we repeat the arguments of the proof of Proposition
\ref{s09} to conclude that for all $\tau\ge 0$,
\begin{equation*}
\lim_{n\to\infty} E\Big[ \Big( \< M_n\>_\tau -  E\big[ \< M_n\>_\tau
\big] \Big)^2 \Big] \;=\; 0\;.
\end{equation*}
This proves \eqref{l02}. Since $\Delta M_n$ is absolutely bounded by
one, in view of \eqref{l01} and \eqref{l02}, by \cite[Theorem
VIII.3.11]{JS}, the martingale $M_n(t)$ converges in distribution to
$B_{v(t)}$, where $B$ is a Brownian motion and $v(t) = \exp (- (\ln 2)
\, e^{-\beta \tau}) - (1/2)$. \qed

\end{document}